\def\greaterthansquiggle{\raise.3ex\hbox{$>$\kern-.75em\lower1ex\hbox{$\sim$}}}
\def\lessthansquiggle{\raise.3ex\hbox{$<$\kern-.75em\lower1ex\hbox{$\sim$}}}
\newcommand{\beq}{\begin{equation}}
\newcommand{\eeq}{\end{equation}}
\newcommand{\beqa}{\begin{eqnarray}}
\newcommand{\eeqa}{\end{eqnarray}}
\newcommand{\beqan}{\begin{eqnarray*}}
\newcommand{\eeqan}{\end{eqnarray*}}
\newcommand{\ba}{\begin{array}}
\newcommand{\ea}{\end{array}}
\newcommand{\A}{{\cal A}}
\newcommand{\B}{{\cal B}}
\newcommand{\C}{{\cal C}}
\newcommand{\Ha}{{\cal H}}
\newcommand{\I}{{\cal I}}
\newcommand{\X}{{\cal X}}
\def\nz{\ifmmode {I\hskip -3pt N} \else {\hbox {$I\hskip -3pt N$}}\fi}
\def\zz{\ifmmode {Z\hskip -4.8pt Z} \else
       {\hbox {$Z\hskip -4.8pt Z$}}\fi}
\def\qz{\ifmmode {Q\hskip -5.0pt\vrule height6.0pt depth 0pt
       \hskip 6pt} \else {\hbox
       {$Q\hskip -5.0pt\vrule height6.0pt depth 0pt\hskip 6pt$}}\fi}
\def\rz{\ifmmode {I\hskip -3pt R} \else {\hbox {$I\hskip -3pt R$}}\fi}
\def\cz{\ifmmode {C\hskip -4.8pt\vrule height5.8pt\hskip 6.3pt} \else
       {\hbox {$C\hskip -4.8pt\vrule height5.8pt\hskip 6.3pt$}}\fi}
\newtheorem{theorem}{Theorem}
\newtheorem{definition}{Definition}
\newtheorem{lemma}{Lemma}
\def\au{{\setbox0=\hbox{\lower1.36775ex%
\hbox{''}\kern-.05em}\dp0=.36775ex\hskip0pt\box0}}
\def\ao{{}\kern-.10em\hbox{``}}
\begin{document}
\bibliographystyle{plain}
\begin{titlepage}



\begin{center}
{\Large \bf  Translation Invariant States on Twisted Algebras on a
Lattice}
\\[30pt]

Bernhard Baumgartner $^\ast$, Fabio Benatti $^{**}$,
Heide Narnhofer  $^\ast $\\
[10pt] {\small\it}
$^\ast$ Fakult\"at f\"ur Physik, Universit\"at Wien\\
$^{**}$ Department of Theoretical Physics, University of Trieste
\vfill \vspace{0.4cm}

\begin{abstract}
We construct an algebra with twisted commutation relations and
equip it with the shift. For appropriate irregularity of the non-local
commutation relations we prove that the
tracial state is the only translation-invariant state.
\vskip 2cm

\smallskip
Keywords: Generalized Price-Powers-shift, generalized
Jordan-Wigner- transformation, invariant states
\\
\hspace{1.9cm}

\end{abstract}
\end{center}

\vfill {\footnotesize}

email: bernhard.baumgartner@univie.ac.at

email: benatti@ts.infn.it

email: heide.narnhofer@univie.ac.at
\end{titlepage}

\section{Introduction}

Whereas the dynamics of quantum systems with finitely many degrees of
freedom is fully controllable via the spectrum of the unitary operator
that implements it, this is not longer true with infinitely many degrees
of freedom. While these systems can always be described by
operators acting on suitable Hilbert spaces, it may however happen,
typically in the thermodynamic
limit of interacting quantum systems,  that
not all of these operators represent observable quantities.
In such cases, what one does is to focus not upon
the dynamics, rather on its equilibrium (time-invariant) states.

A great step forward in our understanding of thermodynamics, in particular of
its second law, would follow by showing that for interacting systems the class
of invariant states is drastically reduced in comparison to those of
quasifree evolutions, where e.g. every state in equilibrium with
respect to any (space-)translation invariant dynamics is also (space-)
translation invariant. For interacting systems, one hopes that invariant
states might be characterized by only a few thermodynamic quantities.

Unfortunately, scarcely any control on the dynamics of
realistic physical systems is available; however, some steps forward
can be made following a toy model approach and constructing  algebras
and discrete dynamics on these algebras such that only one invariant
state exists. A
particular instance of invariant state is the so-called tracial, or
totally mixed, state $\omega$ whose two-point correlation functions
are such that $\omega(AB)=\omega(BA)$ for all $A,B$ in
the algebra of operators.

Examples of quantum dynamical systems where the tracial state is the only invariant state are the
Price-Powers shifts \cite{RTP,GLP},\cite{SNT}, \cite{NT} or the irrational
rotation algebras \cite{BNS},\cite{NT}. In both cases, what forbids
the existence of invariant states different from the tracial one is
that operators in the course of time anticommute infinitely often and
sufficiently irregularly with one another.
Such a lack of asymptotic commutativity (abelianess) is indeed
expected in real interacting quantum systems.

In the following, we give another example of such quantum dynamical
systems inspired very much by the Price-Powers shifts.
In this latter case, the algebra is created by self-adjoint operators
$e_k=e^*_k$, $e_k^2=1$, $k\in \mathbb{N}$, that commute or
anticommute, $e_ke_p=(-1)^{g(|p-k|)}e_pe_k$,
as prescribed by a so-called \textit{bitstream}, namely a two-valued
function on the integers,
$g:\mathbb{N} \mapsto\{0,1\}$, $g(0)=0$.
In the following, we will consider Weyl-operators in place of the
$e_k$ and organize them in such a way that
shifted Weyl-operators remain in Weyl-like relation with one another
in a so-called \textit{complementary} manner \cite{DP} so that any two
of them create a full matrix algebra $M_{d\times d}$.

We shall show that, under the hypothesis of sufficiently
irregular complementary Weyl-like relations,
the tracial state is the only translationally
invariant state as for the Powers-Price shifts.
The paper is organized as follows:
\begin{itemize}
\item
in chapter 2. we define the algebra and the complementary relations,
together with some representations either as an AF-algebra
(in some cases a UHF algebra) or as a
quantum-spin chain by means of generalized Jordan-Wigner transformations;
\item
in chapter 3. we show that, under the assumption of sufficiently
random commutation relations, only the tracial state can be
translation invariant.
\end{itemize}

\section{The Algebra and its Automorphism}

\subsection{Definitions}\label{definitions}

We start with infinitely many finite-dimensional algebras $\A_m$,
$m\in \mathbb{N}$,  all isomorphic to $d\times d$ matrix algebras,
created by the operators $W^{(m)} _{\vec k}$,
$\vec{k}\in\mathbb{Z}^2_d:=\{(k_1,k_2),\ k_i=0,1,\ldots,d-1\}$,
that satisfy the commutation relations of a discrete Weyl group

\beq
\label{Weyllett}
W^{(m)} _{\vec{k}_1}W^{(m)}_{\vec{k}_2}=
e^{\frac{2\pi i\sigma(\vec{k}_1,\vec{k}_2)}{d}}
W^{(m)}_{\vec{k}_2}W^{(m)} _{\vec{k}_1}\ ,
\eeq
with symplectic form $\sigma(\vec{k}_1,\vec{k}_2):=k_{11}k_{22}-k_{12}k_{21}$.

Note that $W^{(m)}_{\vec{0}}=\mathbbm 1$, and  $W^{(m)}_{-\vec{k}}= (W^{(m)}_{\vec{k}})^{-1}$.

The relations between Weyl operators with different upper indices
are \textit{twisted} by means of a sequence of $2\times 2$ matrices
$A_n$, $n\in \mathbb{Z}$, with entries in $\{0,1,\ldots, d-1\}$;
explicitly,
\beqa
\label{twistedWeyl}
W^{(p)}_{\vec{k}_p}W^{(q)} _{\vec{k}_q} &=&
{\rm e}^{2\pi\,i\,u_{\vec{k}_p\vec{k}_q}(q-p)}
\, W^{(q)}_{\vec{k}_q}W^{(p)} _{\vec{k}_p}\ ,
\\
\label{twistedWeyl1a}
u_{\vec{k}_p\vec{k}_q}(q-p)&:=&
\frac{1}{d}\,\sigma(\vec{k}_p,
A_{q-p}\vec{k}_q)\ .
\eeqa

Setting $A_0=\mathbbm 1$, the single-site relations (\ref{Weyllett})
are a particular instance of (\ref{twistedWeyl}).

The finite products define elements of an infinite discrete group. We denote them as
\beq
\label{Weylword} W_I:=W^{(1)}_{\vec{k}_1}W^{(2)}_{\vec{k}_2}\cdots
W^{(\ell)}_{\vec{k}_\ell}\ ,
\eeq
where $I$ denotes
a sequence of vectors $\{\vec{k}_m\}_{m\in\mathbb{Z}}$ with only finitely
many components, $(\vec{k}_1,\vec{k}_2,\ldots,\vec{k}_\ell)$, possibly
different from the vector $\vec{0}$.
We define a star operation as $(W^{(m)}_{\vec{k}})^*= (W^{(m)}_{\vec{k}})^{-1}$,
and $(UV)^*=V^*U^*$ as usual.
So the Weyl operators and their products are unitary elements of
the $C^*$ algebra generated by the finite products $W_I$,
which are assumed to be linearly independent.
We shall denote this algebra by $\A$.

We shall further equip $\A$ with an automorphism
$\alpha:\A\mapsto\A$ such that
$\alpha^n\left(W^{(0)}_{\vec{k}}\right)=W^{(n)} _{\vec{k}}$.
Then, generic algebraic
relations read
\beqa
W_I\alpha^n(W_J)&=&\Bigl(\prod_{a=1}^{n_I}W^{(a)}_{\vec{k}_a}\Bigr)\,
\Bigl(\prod_{b=1}^{n_J}W^{(b+n)}_{\vec{k}_b}\Bigr)=
{\rm e}^{2\pi\,i\,u_n(I;J)}\,
\alpha^n(W_J)\,W_I\ ,
\\
\label{gen-alg-rel}
u_n(I;J)&:=&
\label{comrel}
\frac{1}{d}\sum_{a=1}^{n_I}
\sum_{b=1}^{n_J}\sigma(\vec{k}_a\,,\,A_{b+n-a}\vec{k}_b)=
\sum_{a=1}^{n_I}\sum_{b=1}^{n_J}u_{\vec{k}_a\vec{k}_b}(b+n-a)\ .
\eeqa
\medskip

\noindent
\textbf{Remarks 1}\quad

\begin{enumerate}
\item
Every sequence of $2\times 2$ matrices $\{A_n\}_{n\in\mathbb{N}}$
generates its own algebra $\A$; however, for special sequences
$\{A_n\}_{n\in\mathbb{N}}$
(as will be shown in the next section), the
corresponding algebras built by $W^{(m)} _{\vec k}, m=1,,,l$ will be isomorphic $\forall l$ to the tensor product over $l$ local sizes and therefore the total algebra can be
considered as the same $\A$ equipped with the usual shift and a different automorphism
$\alpha$ with finite speed.
\item
We will refer to the Weyl
operators $ W^{(m)} _{\vec{k}}$ in (\ref{Weyllett}) as to the letters of
the algebra $\A$ and to the products $W_I$ as in (\ref{Weylword}) as to the
words of $\A$.
The norm of every word equals
$1$ and $(W_I)^d=\mathbbm 1$, as the eigenvalues of all $W_I$ are the pure phases
$e^{\frac{2\pi i\ell}{d}}$, $0\leq\ell\leq d-1$.
Linear combinations of words will be referred to as sentences,
as in the case of the Price-Powers-shift. A state
$\omega:\A\mapsto\C$ over the algebra amounts to a positive,
normalized functional over $\A$: it is thus fixed by
giving its values on all words.
\end{enumerate}

We proceed in studying the algebras,
showing the existence of a non-trivial representation.
This is not a trivial problem, see \cite{AP}.
Consider the set $\I$ of multiindices $I$ and define the
composition law of addition: $\I\times\I\mapsto\I$:
if $I=\{\vec{k}_m^I\}_{m\in\mathbb{Z}}$ and
$J=\{\vec{k}_m^J\}_{m\in\mathbb{Z}}$, then
$$
(I,J)\mapsto I+J=\{\vec{k}^I_m+\vec{k}^J_m\}_{m\in\mathbb{Z}}\ ,
$$
where the sum of vectors
$\vec{k}^I_m+\vec{k}^J_m$ is understood modulo $d$.
Then, the family of operators
$\{\widetilde{W}_I\}_{I\in \I}$ satisfying the multiplication law
$\widetilde{W}_I\,\widetilde{W}_J=\widetilde{W}_{I+ J}$, form an
Abelian group $G$ on which the shift defines an automorphism with an
associated shift-invariant measure $\delta$ such that
$\delta(\widetilde{W}_I)=0$ unless $I=I_0:=\{\vec{0}\}_{m\in\mathbb{Z}}$.
Therefore, one can consider the Hilbert space $\ell_2(G)$ spanned by
the orthonormal elements $\widetilde{W}_I$ and represent the Weyl
operators $W_I$ introduced before by
\beq
\label{cocycle}
\Pi(W_I)
\widetilde{W}_J=\underbrace{{\rm e}^{i\pi\,u_0(I,J)}}_{\omega(I;J)}
\,\widetilde{W}_{I+ J}\ .
\eeq
 $\omega(I;J)$ is a cocycle, namely
$$
\omega(I_1,I_2+I_3)\,\omega(I_2,I_3)=\omega(I_1+I_2,I_3)\,\omega(I_1,I_3)
\ ;
$$
In this way $\A$ is represented as a sub-algebra
$\Pi(\A)\subseteq\B(\ell_2(G))$ (it is known as the regular representation)
of the bounded operators on $\ell_2(G)$ and thus
all considerations in \cite{NS} are therefore applicable
to $\A$. For instance, $\A$ has a trivial center if
to any word $W_{I}$ there exists another word $W_{J}$
such that $W_{I}$ and $W_{J}$ do not commute. In this
case the trace on the
algebra
\begin{equation}
tr(W_{I})=0 \quad \forall I\neq \emptyset, \quad
tr(1)=1
\end{equation}
is unique and it is implemented in the regular
representation by
$$
tr(W_I)=\langle\widetilde{W}_{I_0}\vert\Pi(W_I)\vert\widetilde{W}_{I_0}\rangle
\ .
$$
Evidently. the trace is
invariant under the shift automorphism $\alpha:\A\mapsto\A$
\medskip

Like in \cite{NS}, the main interest is in finding conditions on the
cocycle (\ref{cocycle}) such that no other invariant states exist
other than the tracial state.
In \cite{NS} the main tool was the high degree of anticommutativity,
a generalization of the fact, that
translation invariant states over Fermi systems have to be even.
Though this criterion is sufficient only if $d=2$, non-commutativity
as embodied in (\ref{twistedWeyl}) will nevertheless turn out to be just as powerful in
restricting the class of invariant states. We will indeed give other
arguments to enlarge the class of automorphisms that allow only the
tracial state as invariant state. Though not optimal, the result
indicates that delocalization by the dynamics as we describe it in (2.2)is an effect which is
worth studying in more detail and in more realistic thermodynamic
systems.

\subsection{Spin-Chain Representation}\label{spinchainerep}

We have already given a representation of the algebra over $\B
(l_2(G)).$ However, in order to bring it in closer contact with
physical models we seek connections with spin chains and their automorphisms.

This demands different representations.
To demonstrate the differences we make a short detour,
considering finite algebras defined on a finite ring of $N$ lattice points (= upper indices)
with a cyclic shift, instead of an infinite set of points with a non-recurrent shift.
We get different dimensions of the Hilbert spaces. In the above mentioned representation
there are $(d^2)^N$ basis vectors.
For a spin-chain one would expect only $d^N$ as necessary.
But this, as will turn out, is possible for a restricted set of defining sequences only.
In Section \ref{jordanwigner} we give then a representation with a double-spin-chain,
possible for any defining sequence,
employing again $(d^2)^N$ basis vectors for the finite algebra on a ring.

More precisely, we shall try to represent the Weyl operators
$W^{(m)}_{\vec k}$, $m=0,\ldots,N$, as elements of the full matrix
algebra $\bigotimes_{n=0}^N(M_{d\times d})_n$.
We proceed step by step: let us define the Weyl operator at site
$0\leq j\leq N$ to be
\beq
\label{weyl-spin1}
W^{(j)}_{\vec k}=W_{A_{0,j}\vec{k}}\otimes
W_{A_{1,j}\vec{k}} \otimes W_{A_{j,j}\vec{k}}\otimes 1_{j+1}\ldots
\otimes 1_N\ .
\eeq
The unknowns in the construction are the $2\times 2$ matrices with
integer entries from $\{0,1,\ldots,d-1\}$ that we have to adjust in order
to fulfil the commutation relations (\ref{twistedWeyl}).
Therefore, from (\ref{Weyllett}), one gets the condition
\beq
\label{weyl-spin2}
\sum_{\ell=0}^j\sigma (A_{\ell,j}\vec k, A_{\ell,j}
\vec{m})=\sigma(\vec{k},\vec{m})\ ,
\eeq
forall $\vec{k}$, $\vec{m}$ and $0\leq j\leq N$
which is equivalent to
$\sum_{\ell=0}^j Det(A_{\ell,n})=1$.
If $d$ is prime, then all Weyl operators are unitarily isomorphic so
that the algebra created by them is $M_{d\times d}$; therefore, in the
rest of this section, $d$ will be assumed to be a prime number.

We have to control whether this ansatz can really be satisfied and how
far the matrices $A_{\ell,k}$ are determined by the matrices $A_n$
in (\ref{twistedWeyl1a}). It turns out that $A_{0,j}=A_j$ while $A_{00}=1$.
The other matrices
$A_{\ell,k}$ have to be calculated recursively from
(\ref{twistedWeyl})
and (\ref{twistedWeyl1a}). More
precisely to evaluate
\begin{equation}
\sigma (A_{0,1} \vec k, A_{0,2} \vec
l)+\sigma  (A_{1,1} \vec k,A_{1,2} \vec l) = \sigma (\vec
k,A_{0,1} \vec l)
\end{equation}
we define the linear map $A\mapsto \widehat{A}$,
$$
A=\left ( \ba{cc} a_{11} & a_{12} \\
a_{21} & a_{22} \ea \right ) \rightarrow   \widehat{A}:=
\left ( \ba{cc}
a_{22} & -a_{12} \\ a_{21} & a_{11} \ea \right )
$$
With this map, it follows that
$A_{1,2}=\hat A_{1,1}^{-1}(A_{0,1}-\hat A_{0,1}A_{0,2})$.
This fixes
$A_{1,2}$ if we take into account that the freedom in $A_{1,1}$ is
reduced to an isomorphism inside of the local algebra. However, we
can only be sure that there exists a solution $A_{1,2}$ if $A_{1,1}$ is
invertible which surely holds if $Det(A_{1,1})\neq 0$.
Similarly
$$
A_{n-k,n}=\hat{A}_{n-k,n-k}^{-1}(A_{0,n-k}-\sum _{l=0}^{n-k-1}\hat{A}_{l,n-k}A_{l,n})\
 .
$$
The equations are uniquely solvable (up to
trivial local isomorphisms) under the constraint
\beq
\label{cond-det}
Det(A_{0,j})+\ldots +Det(A_{j-1,j})\neq 1\quad \forall j\ .
\eeq
This implies in addition that $Det(A_{j,j})\neq 0$; as a consequence the algebra
generated by the Weyl operators (\ref{weyl-spin1}) is
isomorphic to the full matrix algebra.

Only the sequence of matrices $A_n$ is to our disposal, whereas
the matrices
$A_{\ell,k}$ are linearly depending on them; however, there remains
enough freedom to find sequences that meet the condition (\ref{cond-det}).
A special example corresponds to choosing
$A_{0,j}=\delta _{1,j}\mathbbm1$; this in turn corresponds to the usual shift on the lattice algebra.
More generically, one may choose $A_{0,j}$ to belong to a left ideal with determinant $0$ for all
$j\neq 0$;
it then follows that, for all $k>0$, $A_{n-k,n}$ also belongs to this ideal and therefore $Det(A_{n,n})=1 \quad \forall n.$

As a consequence, provided $Det(A_{n,n})\neq 0\quad \forall n $ we can consider the algebra to be the same algebra i.e. the
spin chain $M_{d\times d}^{\otimes\infty}$, but equipped with different automorphisms
corresponding to the different sequences $A_{0,n}.$ Notice that in the spin-chain representation, the algebra is fairly
simple, while the automorphism (which we will again denote by $\alpha$ and
which corresponds to the shift in the
regular representation) is complicated.

Some discussions on the spin chain representations and their importance for physics
follows in the Conclusion.

\subsection{The Generalized Jordan-Wigner Transformation}\label{jordanwigner}

We have already mentioned the representation of every algebra
$\A$ as a $C^*$ algebra $\Pi(\A)\subseteq\B(\ell_2(G))$.
We can give another representation that can be considered as a  generalization of the
Jordan-Wigner transformations, that relates the spin lattice with the Fermions on a lattice.
We represent $\A$ as a
subalgebra of the doubled spin chain
$\bigotimes _{m=-\infty }^{\infty}(M_{d\times d}\otimes M_{d\times d})_m$
$=\bigotimes _{n=-\infty }^{\infty}(M_{d\times d})_n$,
where $W^{(0)}_{\vec{k}}$ with
$\vec{k}=(k_1,k_2)$ is identified with the
infinite tensor product
\beq
\label{tens-prod}
\left(\bigotimes_{n=1}^{+\infty}(W_{0,b_n})_{-2n}\otimes
(W_{0,a_n})_{-2n+1}\right)\,
\otimes\,
(W_{k_1,0})_0\otimes (W_{k_2,k_1})_1\,\bigotimes_{n=2}^{+\infty}(1)_n\ ,
\eeq
where the Weyl operator $W_{k_1,0}$ is at site
$n=0$, while $W_{k_2,k_1}$ is at site $n=1$,
whereas the operators $W_{0,b_n}$ are located at sites $-2n$
and $W_{0,a_n}$ at sites $-2n+1$.
Moreover,  the components $a_n$ and $b_n$ are determined by $\vec k$
and the commutation relations via
$$
A_n\left (\ba{c} k_1 \\ k_2 \ea \right )=\left (\ba{c} b_n \\
a_n \ea \right )\ .
$$
Furthermore, the action of the shift automorphism $\alpha$ is now
represented as a $2$-step translation along the lattice.

Notice that, since the contributions from the infinite tails commute
with each other by construction, finite tensor products of the form
$W^{(0)}_{k_1,k_2}\cdots W^{(N)}_{l_1,l_2}$ may be effectively represented as
elements of the matrix algebra $\bigotimes_{n=0}^{2N}(M_{d\times d})_n$.
It thus follows that the commutant consists of operators of the form
$$
\bigotimes_{n=2j-1}^{-\infty}(1)_n\otimes
(W_{\ell_1,\ell_2})_{2j}\otimes (W_{\ell_2,0})_{2j+1}\otimes\left(
\bigotimes_{k=j+1}(W_{0,b_{k-j}(\ell)})_{2k}\otimes
(W_{0,a_{k-j}(\ell)})_{2k+1}\right),
$$
plus operators belonging to the center. (The center becomes
trivial if (\ref{cond-det}) holds. But this condition is not needed here.)

We can consider the operators to act on the vector $|\Omega>=\otimes_{-\infty<k<\infty} |0>$ where at each lattice point $W_{0,k}|0>=|0>.$ Representing $\widetilde{W}^0_{k_1,k_2}$ by $$\left(\bigotimes_{n=1}^{+\infty}1\otimes 1\right)\,
\otimes\,
(W_{k_1,0})_0\otimes (W_{k_2,})_1\,\bigotimes_{n=2}^{+\infty}(1)_n\ $$ and letting it acting on $|\Omega>=\vert\widetilde{W}_{I_0}\rangle$ we reproduce $l^2(G)$ and therefore the regular representation.

\medskip

\noindent
{\bf Example}\quad
As a concrete illustration of the algebraic construction of above,
let us consider the Price-Powers shift.
This corresponds to $d=2$, and to a generating sequence of matrices
which are either $A_n=\pmatrix{1&0\cr 0&1}$
corresponding to $g(n)=0$ or $A_n=\pmatrix{0&1\cr
-1&0}$
corresponding to $g(n)=1$.
Then, we set $e_0=W^{(0)}_{1,0}$ and observe that operators at odd
places have the form $W_{0,k}$ and therefore all commute.
Consequently, we can remove them from the tensor product
(\ref{tens-prod})
and represent
$$
e_m=\bigotimes_{n=1}^{+\infty}(W_{0,b_n})_{k-n}\otimes
(W_{1,0})_k\bigotimes_{n=k+1}^{+\infty}(1)_n\ .
$$

\subsection{Preliminary Remarks on Invariant States}\label{prelim}

We now turn to the problem of finding the invariant states under the
$2$-step shift.
If we are only interested in the local effects of such automorphism, we
can take the periodic shift in  $\bigotimes_{n=0}^{2N}(M_{d\times d})$,
which is unitarily implemented.

The algebra created by finitely
many Weyl operators is imbedded in $\bigotimes_{n=0}^{2N}(M_{d\times d})$,
so that we can conclude that the automorphism $\alpha:\A\mapsto\A$
is unitarily implemented. Therefore, we can construct states on the
local algebra defined by density operators that commute with the
unitary that implements the periodic shift.
However, in general, these density operators
will not have a limit when $N\rightarrow \infty$.

Another possibility to construct $\alpha$-invariant states is to
start from vectors in the infinite tensor product $\bigotimes_{n=-\infty}^{\infty}\vert\psi_n\rangle$
that are invariant under the shift. The simple choice where all
$\vert\psi_n\rangle$ are identical to an eigenvector
$\vert\phi\rangle$ of $W_{(0,1)}$
gives, independently of the sequence $\{A_n\}_{n\in\mathbb{N}}$,
already the tracial state; indeed, since for every $W_I$ at some
position
$\langle \phi|W_{k,l}|\phi \rangle =0$.
Therefore, the representation is isomorphic to the regular
representation.

If we choose some other vector and assume that $A_n\neq 1$
for infinitely many $n$, again we obtain the tracial state; in fact,
either
$\displaystyle\left|
\langle\psi|W_{0,b_n}|\psi \rangle\,\langle\psi\vert W_{0,a_n}
\vert\psi\rangle\right|\,<\,1$,
infinitely often so that
$$
\prod_{n=-\infty}^{+\infty}\left|
\langle\psi|W_{0,b_n}|\psi \rangle\,\langle\psi\vert W_{0,a_n}
\vert\psi\rangle\right|=0\ .
$$

If for some $N>0$ $A_n=0$ for all $n>N$, we can choose a vector $\phi $ over
$\bigotimes _{n=1} ^{2N}M_{d\times d}^{(n)}$ that is appropriately
entangled over the lattice points to guarantee that
$\langle \phi |W_{k_1,0)}|\phi \rangle \neq 0$.
By averaging this vector over the period $N$, one gets an
expectation value still $\neq 0$.
However, in general, one expects that it decreases with $N$.
Therefore, if for every $N$ we can find $n>N$ such that $A_n\neq 0$ in order to obtain another invariant state in the limit
$N\rightarrow \infty$, it is necessary  to have
correlations between infinitely many
lattice points; this can hardly be satisfied because of
monogamy of entanglement.
Though we are unable to exclude that other invariant states
might be constructed, our considerations already indicate that the tracial
state will turn out to be the only invariant state under
appropriate conditions on the defining sequence of matrices $A_n$.

\section{Invariant states}\label{invariantstates}

Given $(\A, \alpha)$, let $\omega$ be an invariant state such that
$\omega \circ \alpha=\omega$ and consider the corresponding GNS
representation $\pi_\omega$ of $\A$ as a $C^*$ algebra of bounded
operators on the GNS Hilbert space $\Ha_\omega$ with
cyclic vector $\vert\Omega\rangle$.
Namely, $\omega(A)=\langle\Omega\vert\pi_\omega(A)\vert\Omega\rangle$;
further, the shift automorphism $\alpha $ is implemented by a unitary
$U_{\omega}$ such that
$U_\omega\vert\Omega\rangle=\vert\Omega\rangle$.
The following simple lemmas hold.
\medskip

\begin{lemma}
\label{lemma1}
Let $P^\omega_0$ denote the projection onto the $U_\omega$-invariant
subspace of $\Ha_\omega$.
If
\beq
\label{maintool3}
\langle\Omega\vert\pi _{\omega}(W^*_I)P^{\omega }_0 \pi _{\omega
}(W_I)\vert\Omega \rangle=0
\eeq
is true for all $W_I$, then $\omega$ is tracial that is
$\omega(XY)=\omega(YX)$ for all $X,Y\in\A$.
\end{lemma}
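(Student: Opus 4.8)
The plan is to read hypothesis~(\ref{maintool3}) as a statement about a single vector and then to transport the invariant projection onto the cyclic vector. Since $P^\omega_0$ is an orthogonal projection, $(P^\omega_0)^*=P^\omega_0=(P^\omega_0)^2$, the left-hand side of (\ref{maintool3}) equals $\Vert P^\omega_0\,\pi_\omega(W_I)|\Omega\rangle\Vert^2$. Hence the assumption is equivalent to $P^\omega_0\,\pi_\omega(W_I)|\Omega\rangle=0$ for every nontrivial word $W_I$, i.e.\ the GNS vector $\pi_\omega(W_I)|\Omega\rangle$ has no component in the $U_\omega$-invariant subspace. (By von Neumann's mean ergodic theorem $P^\omega_0=\lim_N \tfrac1N\sum_{n=0}^{N-1}U_\omega^n$ strongly, and $U_\omega^{n}\pi_\omega(W_I)|\Omega\rangle=\pi_\omega(\alpha^n(W_I))|\Omega\rangle$, so this is the vanishing of the Cesàro-averaged correlations $\tfrac1N\sum_{n=0}^{N-1}\omega(W_I^*\,\alpha^n(W_I))$; that reformulation is what one verifies in the applications, but it is not needed for the lemma itself.)

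The key step is then immediate. Because $U_\omega|\Omega\rangle=|\Omega\rangle$, the cyclic vector lies in the invariant subspace, so $P^\omega_0|\Omega\rangle=|\Omega\rangle$. Using self-adjointness of $P^\omega_0$ I would slide it across the inner product: for every nontrivial word $W_I$,
\[
\omega(W_I)=\langle\Omega|\pi_\omega(W_I)|\Omega\rangle=\langle P^\omega_0\Omega|\pi_\omega(W_I)|\Omega\rangle=\langle\Omega|P^\omega_0\,\pi_\omega(W_I)|\Omega\rangle=0 .
\]
Thus the hypothesis forces $\omega$ to annihilate every word except $W_{I_0}=\mathbbm 1$, on which $\omega(\mathbbm 1)=1$; in other words $\omega$ coincides on all words with the tracial functional $tr(W_I)=0$ ($I\neq I_0$), $tr(\mathbbm 1)=1$.

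Finally I would upgrade ``$\omega$ vanishes on nontrivial words'' to the trace property $\omega(XY)=\omega(YX)$. It suffices to treat $X=W_I$, $Y=W_J$, since the words span a dense $*$-subalgebra and $\omega$ is norm-continuous. By the group law of Section~\ref{definitions} the product $W_IW_J$ equals a phase times the standard-ordered word $W_{I+J}$, and likewise $W_JW_I$ equals a (possibly different) phase times $W_{I+J}$. If $I+J\neq I_0$, then $W_{I+J}$ is a nontrivial word, so $\omega(W_IW_J)=0=\omega(W_JW_I)$ by the previous step. If $I+J=I_0$, then $W_J$ is a scalar multiple of $W_I^{-1}$, so $W_IW_J$ and $W_JW_I$ both reduce to the same scalar multiple of $\mathbbm 1$ and hence coincide as operators. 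In either case $\omega(W_IW_J)=\omega(W_JW_I)$, and extending bilinearly and by continuity yields $\omega(XY)=\omega(YX)$ on all of $\A$.

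I expect the only delicate points to be bookkeeping rather than conceptual. First, (\ref{maintool3}) cannot hold for $W_I=\mathbbm 1$ (there the quantity is $\Vert|\Omega\rangle\Vert^2=1$), so it must be read over nontrivial words. Second, one must confirm the normal-form identity $W_IW_J=(\text{phase})W_{I+J}$ and isolate the scalar case $I+J=I_0$. Neither requires triviality of the center nor condition~(\ref{cond-det}); the substantive work, namely checking that (\ref{maintool3}) genuinely holds for sufficiently irregular sequences $\{A_n\}$, is deferred to the subsequent results.
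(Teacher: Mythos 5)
Your proof is correct and takes essentially the same route as the paper's: both first deduce that $\omega$ vanishes on every nontrivial word (the paper via the operator inequality $P^\omega_0\geq|\Omega\rangle\langle\Omega|$, you via $\Vert P^\omega_0\,\pi_\omega(W_I)|\Omega\rangle\Vert^2=0$ together with $P^\omega_0|\Omega\rangle=|\Omega\rangle$ --- the same observation in different packaging), and then conclude traciality because any product $W_IW_J$ is a phase times a word, which gives $\omega(W_IW_J)=0=\omega(W_JW_I)$ unless $W_IW_J\propto\mathbbm 1$, in which case $W_IW_J=W_JW_I$ as operators. Your write-up is merely more explicit than the paper's terse final step (and correctly flags that the hypothesis must be read over nontrivial words), but the argument is the same.
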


\noindent
\begin{proof}
Since $P^\omega_0\geq\vert\Omega\rangle\langle\Omega\vert$, the
assumption implies
$$
|\omega(W_I)|^2=\langle\Omega\vert\pi _{\omega}(W^*_I)
\vert\Omega\rangle\langle\Omega\vert\pi _{\omega}(W_I)\vert\Omega
\rangle
\leq
\langle\Omega\vert\pi _{\omega}(W^*_I)P^{\omega }_0 \pi _{\omega
}(W_I)\vert\Omega \rangle=0\ ,
$$
whence, by Cauchy-Schwartz, $\omega(W_I\,W_J)\neq 0$ only if
$W_IW_J=1$. Thus, $\omega(W_I\,W_J)=\omega(W_J\,W_I)$ for all $W_{I,J}$, whence
$\omega(XY)=\omega(YX)$ for $\A$ is generated by linear combinations
of $W_I$s.
\end{proof}

\begin{lemma}
\label{lemma2}
If $\omega$ is $\alpha$-invariant, then, setting $u_n(I):=u_n(I;I)$
in~(\ref{comrel}),
\beq
\label{maintool1}
\langle\Omega \vert\pi _{\omega }(W^*_I)P^{\omega } _0\pi _{\omega }(W_I)
\vert\Omega\rangle
=\lim _{N\rightarrow \infty }\frac{1}{N}\sum^{N-1} _{n=0}
e^{2\pi\,i\,u_n(I)}\langle\Omega\vert\pi _{\omega
}(W_I)U^{-n}_{\omega } \pi _{\omega }(W^*_I)\vert\Omega\rangle\ .
\eeq
\end{lemma}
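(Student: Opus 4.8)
The plan is to unfold the projection $P^\omega_0$ by the mean ergodic theorem, convert the resulting matrix elements into two-point correlation functions of $\omega$, and then use the twisted commutation relation (\ref{comrel}) to bring them into the reordered form appearing on the right-hand side. First I would invoke von Neumann's mean ergodic theorem: since $U_\omega$ is unitary and $P^\omega_0$ is the orthogonal projection onto its invariant subspace $\ker(U_\omega-\mathbbm 1)$, one has the strong limit
\beq
P^\omega_0=\lim_{N\to\infty}\frac1N\sum_{n=0}^{N-1}U^n_\omega\ .
\eeq
Writing $\xi:=\pi_\omega(W_I)\vert\Omega\rangle$, the left-hand side of (\ref{maintool1}) is $\langle\xi\vert P^\omega_0\vert\xi\rangle$, and strong convergence turns it into the Ces\`aro average
\beq
\langle\Omega\vert\pi_\omega(W^*_I)P^\omega_0\pi_\omega(W_I)\vert\Omega\rangle=\lim_{N\to\infty}\frac1N\sum_{n=0}^{N-1}\langle\Omega\vert\pi_\omega(W^*_I)U^n_\omega\pi_\omega(W_I)\vert\Omega\rangle\ .
\eeq

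Next I would rewrite each summand as a correlation function. Because $U_\omega$ implements $\alpha$, one has $U^n_\omega\pi_\omega(A)U^{-n}_\omega=\pi_\omega(\alpha^n(A))$, and because $U_\omega\vert\Omega\rangle=\vert\Omega\rangle$ (so that also $\langle\Omega\vert U^{\pm n}_\omega=\langle\Omega\vert$), each term collapses to
\beq
\langle\Omega\vert\pi_\omega(W^*_I)U^n_\omega\pi_\omega(W_I)\vert\Omega\rangle=\omega\bigl(W^*_I\,\alpha^n(W_I)\bigr)\ .
\eeq
Now I would apply the twisted commutation relation (\ref{comrel}) specialized to $J=I$: commuting the shifted factor $\alpha^n(W_I)$ through $W^*_I$ produces exactly the phase $e^{2\pi i u_n(I)}$, so that $\omega(W^*_I\alpha^n(W_I))=e^{2\pi i u_n(I)}\,\omega(\alpha^n(W_I)W^*_I)$. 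Finally, reading the reordered correlation function back in the GNS representation — using once more that $U_\omega$ implements $\alpha$ and fixes $\vert\Omega\rangle$ — gives
\beq
\omega\bigl(\alpha^n(W_I)W^*_I\bigr)=\langle\Omega\vert\pi_\omega(W_I)U^{-n}_\omega\pi_\omega(W^*_I)\vert\Omega\rangle\ ,
\eeq
and inserting this into the average reproduces the right-hand side of (\ref{maintool1}).

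The two translations between matrix elements and correlation functions are routine, being immediate from $U_\omega\vert\Omega\rangle=\vert\Omega\rangle$ and from $\pi_\omega(\alpha^n(\cdot))=U^n_\omega\pi_\omega(\cdot)U^{-n}_\omega$. The main obstacle is the bookkeeping in the middle step: one must verify that when the \emph{entire} word $\alpha^n(W_I)$ is moved through $W^*_I$ the accumulated product of pairwise symplectic phases in (\ref{comrel}) collapses to the single quantity $u_n(I)=u_n(I;I)$, and one must get its sign right, keeping in mind that $W^*_I$ carries the negated and order-reversed letters of $W_I$ so that bilinearity of $\sigma$ governs the total phase. Convergence itself is not a difficulty: the mean ergodic theorem already furnishes the limit, and since the factors $e^{2\pi i u_n(I)}$ have modulus one they may be inserted term by term without disturbing the existence of the Ces\`aro limit.
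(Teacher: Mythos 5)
Your proposal follows exactly the route of the paper's own proof: von Neumann's mean ergodic theorem replaces $P^\omega_0$ by the Ces\`aro mean of $U^n_\omega$, the relations $U_\omega\vert\Omega\rangle=\vert\Omega\rangle$ and $U^n_\omega\pi_\omega(\cdot)U^{-n}_\omega=\pi_\omega(\alpha^n(\cdot))$ turn each matrix element into the correlation $\omega\bigl(W^*_I\alpha^n(W_I)\bigr)$, and the twisted commutation relation together with $\omega\circ\alpha=\omega$ reorders this into $\langle\Omega\vert\pi_\omega(W_I)U^{-n}_\omega\pi_\omega(W^*_I)\vert\Omega\rangle$. The paper's proof consists of precisely these steps, stated even more tersely.

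There is, however, one genuine issue, and it sits exactly at the step you flag as ``the main obstacle'' and then do not carry out. Commuting $W^*_I$ (not $W_I$) through $\alpha^n(W_I)$ does \emph{not} produce the phase ${\rm e}^{2\pi i\,u_n(I)}$: the word $W^*_I$ carries the letters $-\vec{k}_a$ (their ordering inside the word is irrelevant, since each letter commutes with each letter of $\alpha^n(W_I)$ up to a scalar), and bilinearity of $\sigma$ gives for the accumulated phase
\beq
u_n(I^*;I)=\frac{1}{d}\sum_{a,b}\sigma(-\vec{k}_a,A_{b+n-a}\vec{k}_b)=-u_n(I)\ ,
\eeq
so that $W^*_I\,\alpha^n(W_I)={\rm e}^{-2\pi i\,u_n(I)}\,\alpha^n(W_I)\,W^*_I$ and hence
$\omega\bigl(W^*_I\alpha^n(W_I)\bigr)={\rm e}^{-2\pi i\,u_n(I)}\,\omega\bigl(W_I\alpha^{-n}(W^*_I)\bigr)$.
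The honest bookkeeping therefore yields (\ref{maintool1}) with ${\rm e}^{-2\pi i\,u_n(I)}$ in place of ${\rm e}^{2\pi i\,u_n(I)}$. This sign slip is present in the paper's own statement and proof as well; it is invisible in the motivating Price--Powers case $d=2$, where all phases are $\pm1$, and it is harmless for the sequel, since replacing $u_n(I)$ by $-u_n(I)$ merely conjugates the sequence $v(I)$ of Lemma~\ref{lemma3}, which changes neither the property ${\rm Sp}(v(I))\subseteq\{0\}$ nor the modulus of the limit $\nu(v(I))$. Still, a complete proof must actually perform this computation: as written, your middle step asserts an identity that is false for $d\geq 3$, and the lemma should be stated with the conjugate phase (or with $W_I$ and $W^*_I$ interchanged).
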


\begin{proof}
The mean ergodic theorem of von Neumann~(\cite{Quef}) and~(\ref{comrel}) imply
\beqan
\langle\Omega \vert\pi _{\omega }(W^*_I)P^{\omega } _0\pi _{\omega }(W_I)
\vert\Omega\rangle&=&
\lim _{N\rightarrow \infty }\frac{1}{N}\sum ^{N-1} _{n=0}\omega\Bigl(
W^*_I\alpha ^n\Bigl(W_I\Bigr)\Bigr)\\
&=&
\lim _{N\rightarrow \infty }\frac{1}{N}\sum ^{N-1} _{n=0}
e^{2\pi\,i\,u_n(I)}\,
\omega\Bigl(W_I\alpha ^{-n}\Bigl(W^*_I\Bigr)\Bigr)\ .
\eeqan
\end{proof}

\noindent
{\bf Remarks 1}

\begin{enumerate}
\item
If $W_I=1$, then
$\langle\Omega\vert
\pi_\omega(W_I^*)P^\omega_0\pi_\omega(W_I)
\vert\Omega\rangle=1$,
$u_n(I)=0$ for all $n\in\mathbb{N}$ and (\ref{maintool1})
is trivially satisfied.
\item
Using the spectral decomposition
$
U_\omega=\int_{Sp(U_\omega)}{\rm
d}P_\omega^\lambda\, {\rm e}^{2\pi\,i\lambda}$,
(\ref{maintool1}) reads
\beq
\label{maintool1a}
\langle\Omega \vert\pi _{\omega }(W^*_I)P^{\omega } _0\pi _{\omega }(W_I)
\vert\Omega\rangle
=\lim _{N\rightarrow \infty }\int_{Sp(U_\omega)}{\rm
d}\mu_I(\lambda)
\frac{1}{N}\sum^{N-1} _{n=0}
{\rm e}^{2\pi\,i(u_n(I)-n\lambda)}\ ,
\eeq
where
${\rm d}\mu_I(\lambda):={\rm d}(\langle\Omega\vert\pi _{\omega
}(W_I)P^\lambda_\omega\pi _{\omega }(W^*_I)\vert\Omega\rangle$.
\end{enumerate}
\medskip

We now concentrate on the sequences
$u(I):=\{u_n(I)\}_{n\in\mathbb{N}}$ and
$u^\lambda_n(I):=\{u_n(I)-\lambda\,n\}_{n\in\mathbb{N}}$ and
study their spectrum~\cite{Quef}.
In order to properly introduce this notion, consider a
sequence $v=\{v_n\}_{n\in\mathbb{N}}$ taking its values
$v_n\in\mathbb{C}$ in a compact
subset of the complex numbers. For all
$k\in\mathbb{N}$ the partial sums
\beq
\label{Funct}
S_N(k):=\frac{1}{N}\sum_{n=0}^{N-1}v_n^*v_{n+k}
\eeq
are bounded, thus, the sequence $S(k):=\{S_N(k)\}_{N\in\mathbb{N}}$ has
accumulation points and, by a Cantor-like diagonalization argument
(for details see the Appendix),
there exists at least one subsequence $\{N_j\}_{j\in\mathbb{N}}$ such that
the limit
$$
s_k(v):=\lim_{j\to\infty}\frac{1}{N_j}\sum_{n=0}^{N_j-1}v_n^*v_{n+k}
$$
exists for all $k\in\mathbb{N}$.
By setting $s_{-k}:=s_k^*$, one obtains a positive-definite sequence
(details are again in the Appendix), that is a sequence
$s(v)=\{s_k(v)\}_{k\in\mathbb{Z}}$ such that
$$
\sum_{i,j}z_i^*s_{i-j}(v)z_j\geq 0
$$
for all sequences
$\{z_i\}_{i\in\mathbb{Z}}$ such that
$\sum_{i\in\mathbb{Z}}|z_i|^2<\infty$.
Then, by Bochner's theorem,
$$
s_k(v)=\int_{0}^{1}{\rm d}\mu_v(x)\,{\rm e}^{2\pi\,ikx}\ ,
$$
where ${\rm d}\mu_v(x)$ is a positive (\textit{correlation}) measure
on $[0,1)$ such that
$$
\int_{0}^{1}{\rm d}\mu_v(x)=\lim_{j\to\infty}\frac{1}{N_j}
\sum_{n=0}^{N_j-1}|v_n|^2\ .
$$
If the correlation measure of a sequence $v$ is the Lebesgue measure,
then $s_k(v)=0$
whenever $k\neq0$ and the sequence $v$ is said to be
\textit{uniformly distributed}.
Therefore, it makes sense to introduce the

\begin{definition}[Spectrum of a sequence]\cite{Quef,DT}
\label{def-spec}
Given a sequence $v:=\{v_n\}_{n\in \mathbb{N}}$ with values in
a compact subspace of $\mathbb{C}$, its Fourier-Bohr spectrum is
given by
$$
{\rm Sp}(v):=\left\{\mu\in[0,1)\,:\,
\limsup_{N\to\infty}\frac{1}{N}\,
\left\vert\sum_{n=0}^{N-1}v_n\,{\rm e}^{-2\pi\,in\mu}
\right\vert\neq0\right\}\ .
$$
\end{definition}
\medskip

In other words, the spectrum of a sequence $v$ is the subset of
values $\mu\in[0,2\pi)$
such that the sequences
$v(\mu)=\{v_n\exp(-2\pi in\mu)\}_{n\in\mathbb{N}}$
are not uniformly distributed.
Equivalently, $\mu\notin Sp(v)$ if and only if
\beq
\label{maintool4}
\lim_{j\to\infty}\frac{1}{N_j}\sum_{n=0}^{N_j-1}v_n\,{\rm e}^{-2\pi\,i\mu\,n}=0
\eeq
for all converging subsequences of partial sums.

By means of the spectral properties of sequences, we can derive
sufficient conditions that force the invariant state $\omega$ to be
tracial.
\medskip

\begin{lemma}
\label{lemma3}
Let $v(I):=\{{\rm e}^{2\pi i\,u_n(I)}\}_{n\geq 0}$; then,
the dynamical system $(\A,\alpha)$ has the tracial state as its only
invariant state if for each $I$,
$Sp(v(I))$ is either $\emptyset$ or
$\{0\}$ with
$$
\lim_{N\to\infty}\frac{1}{N}\sum_{n=0}^{N-1}v_n(I)
=\sum_{j=0}^{d-1}p_j(I)e^{\frac{2\pi
i}{d}j}\Bigl(=:\nu(v(I))\Bigr)\qquad(*)
$$
for some $p_j(I) \geq 0$, $j\in D$, $p_0 \neq 1$,
$\sum_{j=0}^{d-1}p_j(I)=1$.
\end{lemma}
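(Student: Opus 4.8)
The plan is to verify the hypothesis of Lemma~\ref{lemma1}, that is, to show that
$X(I):=\langle\Omega\vert\pi_\omega(W^*_I)P^\omega_0\pi_\omega(W_I)\vert\Omega\rangle=0$
for every nontrivial word $W_I\neq\mathbbm 1$; traciality, and hence uniqueness (the trace itself being invariant), then follows at once from Lemma~\ref{lemma1}. First I would substitute the spectral form~(\ref{maintool1a}) of Lemma~\ref{lemma2} for $X(I)$ and rewrite the inner Ces\`aro average as the Fourier--Bohr partial sum $\frac1N\sum_{n=0}^{N-1}v_n(I)\,{\rm e}^{-2\pi i n\lambda}$ of the sequence $v(I)=\{{\rm e}^{2\pi i u_n(I)}\}$ evaluated at the frequency $\lambda\in Sp(U_\omega)$.

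Next I would exploit the spectral hypothesis on $v(I)$. Since $|v_n(I)|=1$, every partial sum is bounded by $1$, and $\mu_I$ is a positive measure of total mass $\omega(W_IW^*_I)=1$. By Definition~\ref{def-spec}, every $\lambda\neq0$ lies outside $Sp(v(I))$, so by~(\ref{maintool4}) the partial sum tends to $0$, while for $\lambda=0$ the hypothesis~$(*)$ gives the limit $\nu(v(I))$. Dominated convergence then lets me carry the limit inside the integral in~(\ref{maintool1a}), collapsing it onto the atom at $\lambda=0$:
\[
X(I)=\mu_I(\{0\})\,\nu(v(I)).
\]
Here I would note that the mass of the atom at $\lambda=0$ is carried by the invariant projection, $\mu_I(\{0\})=\langle\Omega\vert\pi_\omega(W_I)P^\omega_0\pi_\omega(W^*_I)\vert\Omega\rangle$, which is exactly $X(I^{-1})$ for the inverse word $W_{I^{-1}}=W^*_I$; moreover, since $\sigma$ is bilinear one has $u_n(I^{-1})=u_n(I)$, so that $v(I^{-1})=v(I)$ and $\nu(v(I^{-1}))=\nu(v(I))$.

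Running the same computation with $W^*_I$ in place of $W_I$ produces the companion identity, yielding the coupled system $X(I)=X(I^{-1})\,\nu(v(I))$ and $X(I^{-1})=X(I)\,\nu(v(I))$. The crucial observation is that $X(I)=\Vert P^\omega_0\pi_\omega(W_I)\vert\Omega\rangle\Vert^2\geq0$, and likewise $X(I^{-1})\geq0$, are nonnegative reals. Hence if $X(I)>0$, the first relation forces $X(I^{-1})>0$ and makes $\nu(v(I))=X(I)/X(I^{-1})>0$ real and positive; substituting then gives $\nu(v(I))^2=1$, i.e.\ $\nu(v(I))=1$. But $\nu(v(I))=\sum_j p_j(I){\rm e}^{2\pi ij/d}$ is a convex combination of $d$-th roots of unity, and can reach the extreme point $1$ only if $p_0(I)=1$, contradicting $p_0\neq1$. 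Therefore $X(I)=0$ for all nontrivial words, which is what Lemma~\ref{lemma1} requires.

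I expect the main obstacle to be twofold. One is justifying the interchange of limit and integral in~(\ref{maintool1a}): this rests on the uniform bound $|v_n(I)|\le1$ and the finiteness of $\mu_I$, together with the observation that $\lambda\notin Sp(v(I))$ forces the \emph{full} limit, not merely a subsequential one, to vanish. The other is excluding the spurious solution $\nu(v(I))=-1$ that a naive self-consistency argument would tolerate; it is the positivity $X(I)\geq0$ that turns $\nu(v(I))$ into a nonnegative real and thereby rules out $-1$, closing the argument, while the identification $\mu_I(\{0\})=X(I^{-1})$ is the structural input that generates the coupled equations to begin with.
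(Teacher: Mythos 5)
Your proof is correct and follows the same skeleton as the paper's: pass through Lemma~\ref{lemma1}, use the spectral form~(\ref{maintool1a}) from Lemma~\ref{lemma2}, kill all frequencies $\lambda\neq0$ via~(\ref{maintool4}) and dominated convergence, identify the atom $\mu_I(\{0\})$ with $\langle\Omega\vert\pi_\omega(W_I)P^\omega_0\pi_\omega(W^*_I)\vert\Omega\rangle$, and exchange $W_I$ with $W^*_I$ to get the coupled relations --- this is exactly the derivation of~(\ref{maintool5}) in the paper. Where you genuinely diverge is the endgame. The paper asserts $|\nu(v(I))|<1$ and derives the contradiction from the strict inequality chain $X(I)<X(I^{-1})<X(I)$; but $|\nu(v(I))|<1$ does \emph{not} follow from the stated hypothesis $p_0\neq1$ alone (take $p_1(I)=1$ and all other $p_j(I)=0$: then $\nu(v(I))={\rm e}^{2\pi i/d}$ has modulus $1$), so the paper's argument as literally written has a small gap in that edge case. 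Your closing avoids this: from $X(I)\geq0$, $X(I^{-1})\geq0$ and $X(I)=\nu\,X(I^{-1})$, $X(I^{-1})=\nu\,X(I)$ (legitimate, since $u_n(I^{-1})=u_n(I)$ by bilinearity of $\sigma$), positivity forces $\nu$ to be a positive real if $X(I)>0$, hence $\nu^2=1$, hence $\nu=1$, which by the extreme-point property of $1$ on the unit circle requires $p_0(I)=1$ --- contradiction. This uses only $\nu(v(I))\neq1$, which is precisely what $p_0\neq1$ guarantees, so your version of the final step is both correct and strictly more robust than the one in the paper.
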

\medskip

\noindent
\begin{proof}
If $Sp(v(I))=\emptyset$, then, using~(\ref{maintool4}) and
dominated convergence, the right
hand side of~(\ref{maintool1a}) vanishes and the result follows from
Lemmas~\ref{lemma1} and~\ref{lemma2}.

If for some $I$ $Sp(v(I))=\{0\}$ and relation (*) holds for such $I$, then, using
again~(\ref{maintool4}) and dominated convergence,
\beq
\label{maintool5}
\langle\Omega\vert\pi _{\omega }(W^*_I)P^{\omega }_0
\pi_{\omega }(W_I)\vert\Omega\rangle=\nu(v(I))
\langle\Omega\vert
\pi _{\omega }(W_I)P^{\omega}_0\pi _{\omega
}(W^*_I)\vert\Omega\rangle\ .
\eeq
Since $|\nu(v(I))|<1$, by exchanging $W_I$ and $W_I^*$, one gets
$$
\langle\Omega\vert\pi _{\omega}(W^*_I)P^{\omega } _0
\pi _{\omega }(W_I)\vert\Omega\rangle<
\langle\Omega\vert\pi _{\omega}(W_I)P^{\omega } _0
\pi _{\omega }(W^*_I)\vert\Omega\rangle<
\langle\Omega\vert\pi _{\omega}(W^*_I)P^{\omega } _0
\pi _{\omega }(W_I)\vert\Omega\rangle\ .
$$
Thus, $\langle\Omega\vert\pi _{\omega }(W^*_I)P^{\omega}_0\pi_{\omega
}(W_I)\vert\Omega\rangle=0$ and
Lemmas~\ref{lemma1} and~\ref{lemma2} apply.
\end{proof}

For some given $I$, for instance a singleton $I=\{1\}$, there surely
exist sequences of matrices $\{A_n\}_{n\in\mathbb{N}}$,
with entries from $\{0,1,\ldots,d-1\}$, such that
$Sp(v(I))=\emptyset$, or $Sp(v(I))=\{0\}$.

However, in order to use the previous Lemma, we
have to make sure that there exist sequences $\{A_n\}_{n\in\mathbb{N}}$ such
that conditions $1$ or $2$ in the previous Lemma are fulfilled
for all $I$.

In the following, we shall consider the $4$ entries $a_{ij}(n)$
of the matrices $A_n$ as random processes with values
from $\{0,1,\ldots\,d-1\}$.
Then, we shall focus upon  the space $\X$ of sequences
$\vec{x}=\{\vec{x}_n\}_{n\in\mathbb{N}}$, where
$\vec{x}_n=(a_{11}(n),a_{12}(n),
a_{21}(n),a_{22}(n))$ are
$4$-valued vectors with the entries of the matrices $A_n$ as
components. If we want in addition to meet the requirements in (2.2), e.g. that $Det(A_{n,n})=1 \quad \forall n,$ we can restrict to $2$ entries $(a_{11}(n),a_{12}(n))$ but still keep enough randomness.

We equip $\X$ with the shift-automorphism
$(\sigma(\vec{x}))_n=\vec{x}_{n+1}$ and with a $\sigma$-invariant
measure $\mu$ (defined on the $\sigma$-algebra of cylinders).
Concretely, if $f$ is a measurable function on $\X$, then its mean
value with respect to $\mu$ is given by
$$
\mu(f)=\int_\X{\rm d}\mu(\vec{x})\,f(\vec{x})\ ;
$$
furthermore, $\mu(f\circ\sigma)=\mu(f)$.

We shall assume that the dynamical system $(\X,\mu,\sigma)$ is mixing
(hence ergodic), that is, if $f$ and $g$ are two essentially bounded
functions on $\X$ with respect to $\mu$, then
\beq
\label{mixing}
\lim_{k\to+\infty}\mu(f\cdot(g\circ\sigma^k))=\mu(f)\,\mu(g)\ ,
\eeq
We shall call a sequence $\vec{x}\in\X$ typical if for all measurable functions it is
self-averaging and mixing in the following sense:
\beqa
\label{cond1}
\lim_{N\to+\infty}\frac{1}{N}\sum_{n=0}^{N-1}f\circ\sigma^n(\vec{x})
&=&\mu(f)\\
\label{cond2}
\lim_{k\to+\infty}
\lim_{N\to+\infty}\frac{1}{N}\sum_{n=0}^{N-1}f\circ\sigma^n(\vec{x})\,
g\circ\sigma^{n+k}(\vec{x})&=&\mu(f)\,\mu(g)\ .
\eeqa
As a concrete example, consider $(\X,\mu,\sigma)$ as
the product of $4$ identically
distributed independent Bernoulli processes.
\medskip

\begin{theorem}
\label{Th1}
Let $\{A_n\}_{n\in\mathbb{N}}$ be a sequence of matrices which provides
a typical sequence $\vec{x}\in\X$ with respect to a shift-invariant measure $\mu$
as explained above.
Let $u(I)=\{u_n(I)\}_{n\in\mathbb{N}}$ be the sequences defined in Lemma \ref{lemma2};
then, for all $I$, the spectrum of $u(I)$ is $\emptyset$ or $\{0\}$.
\end{theorem}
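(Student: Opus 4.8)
The plan is to reduce the statement to a standard dichotomy for the Fourier--Bohr spectrum of a bounded sequence, with all the dynamical content coming from the mixing of the symbolic system $(\X,\mu,\sigma)$. First I would observe that, for a fixed word $I=(\vec k_1,\dots,\vec k_\ell)$, the phase $u_n(I)$ in~(\ref{comrel}) depends on the defining matrices only through $A_{b+n-a}$ with $1\le a,b\le\ell$, so the index $b+n-a$ ranges over a window of width $2\ell-1$ about $n$. Hence there is a single cylinder function $F_I:\X\to\mathbb{C}$, taking values in the $d$-th roots of unity (since $d\,u_n(I)\in\mathbb{Z}$), with
\beq
F_I(\vec y):=\exp\Bigl(\frac{2\pi i}{d}\sum_{a,b=1}^{\ell}\sigma(\vec k_a,A_{b-a}(\vec y)\,\vec k_b)\Bigr),\qquad v_n(I)=F_I(\sigma^n\vec x),
\eeq
where $A_m(\vec y)$ is the matrix read off from coordinate $m$ of $\vec y$ (the finitely many boundary terms with $b+n-a<0$ are irrelevant in every Ces\`aro average). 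This turns the analytic object $v(I)$ into the Birkhoff data of one bounded observable on $(\X,\mu,\sigma)$.

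Next I would compute the autocorrelation of $v(I)$. Since $\vec x$ is typical, the self-averaging property~(\ref{cond1}), applied to the bounded cylinder function $\overline{F_I}\cdot(F_I\circ\sigma^k)$, shows that the limits
$$
s_k(v(I))=\lim_{N\to\infty}\frac1N\sum_{n=0}^{N-1}\overline{v_n(I)}\,v_{n+k}(I)=\mu\bigl(\overline{F_I}\cdot(F_I\circ\sigma^k)\bigr)
$$
exist for every $k$ as genuine limits, not merely along a subsequence. The mixing property~(\ref{cond2}), with $f=\overline{F_I}$ and $g=F_I$, then yields $\lim_{k\to\infty}s_k(v(I))=\overline{\mu(F_I)}\,\mu(F_I)=|\mu(F_I)|^2=:c\ge 0$. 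In particular the Fourier--Bohr coefficient of $v(I)$ at $0$ equals $\mu(F_I)$, so $c$ is exactly its squared modulus.

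Finally I would feed this into Bochner/Wiener. Writing the correlation measure as $\mathrm{d}\mu_{v(I)}$, the constant part of the autocorrelation produces an atom $c\,\delta_0$, while the remainder $\rho:=\mu_{v(I)}-c\,\delta_0\ge 0$ has Fourier coefficients $\widehat\rho(k)=s_k(v(I))-c\to 0$; Wiener's theorem $\sum_{x}\rho(\{x\})^2=\lim_{K\to\infty}\frac1{2K+1}\sum_{|k|\le K}|\widehat\rho(k)|^2=0$ then forces $\rho$ to be non-atomic. Since, again by Wiener, $\mu\in\mathrm{Sp}(v(I))$ exactly when $\mu_{v(I)}$ carries an atom at $\mu$ (because $\lim_N\frac1N|\sum_{n<N}v_n(I)e^{-2\pi in\mu}|^2=\mu_{v(I)}(\{\mu\})$, whose vanishing kills the $\limsup$ in Definition~\ref{def-spec}), the only possible point of the spectrum is $\mu=0$. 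Thus $\mathrm{Sp}(v(I))=\{0\}$ when $c>0$ and $\mathrm{Sp}(v(I))=\emptyset$ when $c=0$, which is precisely the claim.

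The routine parts are the rewriting $v_n(I)=F_I\circ\sigma^n$ and the bookkeeping of the negligible boundary terms. The hard part is the measure-theoretic core: converting the mixing hypothesis~(\ref{cond2}) into the decay $s_k(v(I))\to c$, and then invoking the Wiener/Rajchman dichotomy (decay of autocorrelations implies no atoms away from $0$) to eliminate every point of the spectrum except $0$. This is where all the ergodic input is actually spent, and where one must be careful that typicality supplies genuine limits rather than merely subsequential ones, so that the correlation measure is well defined and its atom at $0$ has mass exactly $c=|\mu(F_I)|^2$.
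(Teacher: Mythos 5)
Your proof is correct and takes essentially the same route as the paper: the same identification $v_n(I)=F_I(\sigma^n\vec x)$ of the phases with a single cylinder observable evaluated along the orbit, and the same use of typicality, (\ref{cond1}) and (\ref{cond2}), to show that the autocorrelations $s_k(v(I))$ exist as genuine limits and decay to $|\mu(F_I)|^2$, after which your Bochner--Wiener packaging (correlation measure, atom at $0$ of mass $c$, Rajchman remainder) is the same computation the paper performs with its double Ces\`aro average $(*)$, which is precisely the Fej\'er-average formula for the atoms of that measure. One caveat: the ``exactly when'' equality you invoke, $\lim_N\bigl|\frac{1}{N}\sum_{n<N}v_n(I)e^{-2\pi in\mu}\bigr|^2=\mu_{v(I)}(\{\mu\})$, holds in general only as the one-sided inequality $\limsup_N\bigl|\frac{1}{N}\sum_{n<N}v_n(I)e^{-2\pi in\mu}\bigr|^2\le\mu_{v(I)}(\{\mu\})$, but that is the direction your argument (and, implicitly, the paper's) actually uses, so no gap results.
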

\medskip

\noindent
{\bf Proof:}\quad
Observe that the quantities
$\displaystyle{\rm e}^{-2\pi\,i\,u_n(I)}$ can be regarded as
measurable functions on $\X$; more precisely, let $P_j$ project out of
the sequence $\vec{x}$ the $j$-th component ($P_j\vec{x}=\vec{x}_j$).
Then, consider the expression of $u_n(I)$ given by (\ref{comrel}); it
turns out that one can write
\beqan
{\rm e}^{2\pi\,i\,u_n(I)}&=&\prod_{a=1}^{n_I}\prod_{b=1}^{n_I}
G_{\vec{k}_a\vec{k}_b}\circ\sigma^{b-a}\circ
P_n(\vec{x})\quad\hbox{where}\\
G_{\vec{k}_a\vec{k}_b}\circ\sigma^{b-a}\circ
P_n(\vec{x})&=&{\rm e}^{\frac{2\pi\,i}{d}
\,\sigma(\vec{k}_a,A_{b-a+n}\vec{k}_b)}\ .
\eeqan
Therefore, the function on the left hand side of the first equality
belongs to the class of function on $\X$ that have been used to
define typical sequences (see conditions (\ref{cond1}) and (\ref{cond2})).

Now, we consider the following quantity
\beqan
&&
(*):=\lim _{K\rightarrow \infty } \lim _{N\rightarrow \infty} \frac{1}{NK}\sum_{k=0}^{K-1}\sum_{n=0}^{N-1}{\rm e}^{-2\pi\,i\,
(u_n(I)+\lambda\,n)}\,{\rm e}^{2\pi\,i(u_{n+k}(I)+\lambda(n+k))}=\\
&&
\hskip 1cm
=\lim _{K\rightarrow \infty } \lim _{N\rightarrow \infty}\frac{1}{K}\sum_{k=0}^{K-1}{\rm e}^{2\pi\,i\,\lambda\,k}\,
\frac{1}{N}
\sum_{n=0}^{N-1}{\rm e}^{2\pi\,i(u_{n+k}(I)-u_n(I))}\ .
\eeqan
Notice that the second average is just the function $S_N(k)$ in
(\ref{Funct}); then, using
the conditions defining typical sequences, one gets
$$
\lim_{k\to+\infty}\lim_{N\to+\infty}S_N(k)=\left|\mu\left(
\prod_{a=1}^{n_I}\prod_{b=1}^{n_I}
G_{\vec{k}_a\vec{k}_b}\circ\sigma^{b-a}\right)\right|^2=(**)\ .
$$
Therefore, one concludes that in the limits $N\to+\infty$ and
$K\to+\infty$, the expression $(*)$
converges to $0$ if $\lambda\neq 0$, or to $(**)$ if $\lambda=0$.
Thus, the spectrum of $u_n(I)$ is either empty or
consists of $\lambda=0$, because for a typical sequence the expression $(*)$ equals
$$
 \lim _{N\rightarrow \infty}\left|\frac{1}{N}\sum_{n=0}^{N-1}{\rm e}^{2\pi\,i\,
(u_n(I)+\lambda\,n)}\right|^2\ .
$$
\begin{flushright} $\square$
\end{flushright}
\medskip

Finally, by applying Lemma~\ref{lemma3}, we obtain
\medskip

\begin{theorem}
Given a sequence of $2\times 2$ matrices $\{A_n\}_{n\in\mathbb{N}}$
with integer entries such that it is typical in the sense of
conditions (\ref{cond1}) and (\ref{cond2}), then only the tracial state
on the twisted Weyl algebra $\A$ is $\alpha$-invariant.
\end{theorem}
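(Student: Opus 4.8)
The plan is to obtain the theorem as the synthesis of Theorem~\ref{Th1} and Lemma~\ref{lemma3}, the one genuinely new ingredient being a nondegeneracy estimate. Theorem~\ref{Th1} already tells us that, for a typical defining sequence, $\mathrm{Sp}(u(I))$ is either $\emptyset$ or $\{0\}$ for every multiindex $I$, which is the first hypothesis of Lemma~\ref{lemma3}. If in addition I can verify, for each nontrivial $I$ whose spectrum is $\{0\}$, that the Ces\`aro mean $\nu(v(I))$ has the form $(*)$ and satisfies the strict bound $|\nu(v(I))|<1$ — this strict bound being exactly what the proof of Lemma~\ref{lemma3} invokes, and the content of the nondegeneracy recorded there as $p_0\neq1$ — then Lemma~\ref{lemma3}, resting in turn on Lemmas~\ref{lemma1} and~\ref{lemma2}, forces every $\alpha$-invariant state $\omega$ to coincide with the trace. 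Since the trace is itself $\alpha$-invariant, this delivers both existence and uniqueness, so all the remaining work sits in the nondegeneracy.

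First I would dispose of the structural half of $(*)$. Each entry of $v(I)$ is a $d$-th root of unity: by~(\ref{comrel}) the phase $u_n(I)$ lies in $\frac1d\mathbb{Z}$, every $\sigma(\vec{k}_a,A_{b+n-a}\vec{k}_b)$ being an integer. Hence each finite mean $\frac1N\sum_{n=0}^{N-1}v_n(I)$ is automatically a convex combination of the points $e^{2\pi ij/d}$. As exhibited in the proof of Theorem~\ref{Th1}, $v_n(I)=F_I\circ\sigma^n(\vec{x})$ for the fixed cylinder function $F_I=\prod_{a,b}G_{\vec{k}_a\vec{k}_b}\circ\sigma^{b-a}$ on $\X$, so the self-averaging condition~(\ref{cond1}) yields the genuine limit $\nu(v(I))=\mu(F_I)=\sum_{j=0}^{d-1}p_j(I)\,e^{2\pi ij/d}$ with $p_j(I)=\mu\bigl(\{F_I=e^{2\pi ij/d}\}\bigr)\ge0$ and $\sum_jp_j(I)=1$. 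Thus the form demanded in $(*)$ is free, and only the modulus estimate remains.

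The hard part is the strict inequality $|\nu(v(I))|<1$ for every nontrivial $I$. Since $F_I$ is unimodular, $|\mu(F_I)|\le\int_\X|F_I|\,{\rm d}\mu=1$, with equality precisely when $F_I$ is $\mu$-almost everywhere a single phase; so the estimate amounts to showing that the self-commutation phase is \emph{not} essentially constant whenever $I\neq\emptyset$. This is where the genuine randomness of the $A_n$ enters. The value $v_n(I)=e^{2\pi iu_n(I)}$ reads the matrices $A_{b-a+n}$ in a finite window about the site $n$, where for large $n$ these are all free under $\mu$; grouping the sum~(\ref{comrel}) by the offset $m=b-a$, the matrix $A_{m+n}$ enters only through $\sum_{b-a=m}\sigma(\vec{k}_a,A_{m+n}\vec{k}_b)$. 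After trimming $I$ to its support so that $\vec{k}_1,\vec{k}_\ell\neq\vec{0}$, the extremal offset $m=\ell-1$ contributes the single term $\sigma(\vec{k}_1,A_{n+\ell-1}\vec{k}_\ell)$, which couples to no other factor (for $\ell=1$ the same role is played by the lone term $\sigma(\vec{k}_1,A_n\vec{k}_1)$). For $d$ prime and $\vec{k}_\ell\neq\vec{0}$ the vectors $A_{n+\ell-1}\vec{k}_\ell$ range over all of $\mathbb{Z}^2_d$ as the free entries vary, so $\sigma(\vec{k}_1,\cdot)$ takes at least two values there; thus $F_I$ depends nontrivially on that coordinate and is not $\mu$-essentially constant, giving $|\nu(v(I))|=|\mu(F_I)|<1$. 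The delicate point to check carefully is the restricted ensemble described after~(\ref{cond-det}), where to enforce $Det(A_{n,n})=1$ one randomises only two entries per matrix: there I would confirm, word by word, that the surviving randomness still makes some term of $F_I$ non-constant, choosing — if the extremal term degenerates — another pair $(a,b)$ and another free entry on which $\sigma(\vec{k}_a,A_{b-a}\vec{k}_b)$ genuinely depends.

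With $(*)$ in force for all $I$, Lemma~\ref{lemma3} runs verbatim: when $\mathrm{Sp}(v(I))=\emptyset$ the right-hand side of~(\ref{maintool1a}) vanishes by~(\ref{maintool4}) and dominated convergence, while when $\mathrm{Sp}(v(I))=\{0\}$ the relation~(\ref{maintool5}) together with $|\nu(v(I))|<1$ forces $\langle\Omega|\pi_\omega(W_I^*)P_0^\omega\pi_\omega(W_I)|\Omega\rangle=0$. As this holds for every $W_I$, Lemma~\ref{lemma1} identifies $\omega$ with the tracial state, completing the argument.
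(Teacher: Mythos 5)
Your decomposition is exactly the paper's: Theorem~\ref{Th1} feeds into Lemma~\ref{lemma3}. Indeed, the paper's own proof of this theorem consists of the single phrase ``by applying Lemma~\ref{lemma3}''. What you add --- and what the paper silently skips --- is the verification of hypothesis $(*)$ of Lemma~\ref{lemma3}: Theorem~\ref{Th1} only yields the dichotomy that ${\rm Sp}(v(I))$ is $\emptyset$ or $\{0\}$, while the lemma also needs the Ces\`aro mean $\nu(v(I))$ to be a nondegenerate convex combination of $d$-th roots of unity (its proof uses $|\nu(v(I))|<1$). Your two-step treatment of this is correct: the form of $(*)$ is automatic because $u_n(I)\in\frac{1}{d}\mathbb{Z}$ by (\ref{comrel}) and condition (\ref{cond1}) makes the mean converge to $\mu(F_I)$; the strict bound then follows once $F_I$ is not $\mu$-essentially constant, which your extremal-offset argument delivers for $d$ prime and entries of $A_n$ genuinely free under $\mu$, e.g.\ for the i.i.d.\ Bernoulli ensemble the paper names as its concrete example.

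You should also know that the step you filled is a genuine gap, not routine checking: conditions (\ref{cond1})--(\ref{cond2}) alone do not imply the conclusion. The point mass on the constant sequence $A_n=\mathbbm{1}$ is a mixing shift-invariant measure for which that constant sequence is typical; but then the translates $W^{(n)}_{(1,0)}$, $n\in\mathbb{Z}$, all commute (every phase is $\frac{1}{d}\sigma(\vec k,\vec k)=0$), so they generate an abelian $\alpha$-invariant subalgebra, and any non-tracial shift-invariant state on it (say, the character sending every $W^{(n)}_{(1,0)}$ to $1$) extends to an $\alpha$-invariant, non-tracial state of $\A$ by averaging an arbitrary state extension with an invariant mean. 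So your additional hypotheses are needed for the statement to be true, not cosmetic; in fact $d$ prime is needed even for the fully random ensemble, since for $d=4$ and $\vec k=(2,0)$ one has $\sigma(\vec k,A\vec k)\equiv 0 \bmod 4$ for \emph{every} matrix $A$, and the same abelian construction applies. Likewise your worry about the restricted two-entry ensemble enforcing $Det(A_{n,n})=1$ is justified, and the word-by-word repair you propose cannot succeed there: matrices in a rank-one left ideal annihilate a common vector $\vec k_0$, so for the singleton $I=\{\vec k_0\}$ every term $\sigma(\vec k_0,A_m\vec k_0)$ vanishes identically, there is no other pair $(a,b)$ to fall back on, and non-tracial invariant states again exist. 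In short, your proposal follows the paper's route but supplies the nondegeneracy argument its one-line proof presupposes, and correctly locates where that argument can and cannot work.
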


\section{Conclusions}

We have constructed quantum algebras and automorphisms on them such
that they permit only one invariant state, namely the tracial state.
The main tool in the proof was the application of results on classical
random sequences. We should notice however, that our proof can also be
applied to the Price-Powers shift, but there it does not cover all
possibilities for which the result holds.
Indeed, also bitstreams not fulfilling conditions (\ref{cond1}) and
(\ref{cond2}) may have the tracial state as the only shift-invariant state.
Indeed, these two conditions are only sufficient, but not necessary.

Also, the failure in section \ref{jordanwigner} to construct invariant states does not
refer to the special construction of the automorphism, rather it is
long-range non-commutativity which seems to be important. If we can represent the operators as in (\ref{weyl-spin1}) we can embody this non-commutativity in \beq
\label{non-comm}
[[\cdots[\alpha^n(A)\,,\,B_0],B_1],\cdots\,B_n]
\neq 0
\eeq for an appropriate sequence of $B_k$ where the $B_k$ is localized at the lattice point $k$. Therefore the operator $\alpha ^n(A)$ is not only spread as it happens for quasifree evolution but got delocalized also in a multiplicative sense. Nevertheless it has still finite velocity in the sense that, at
every step, an operator in the local algebra $\bigotimes_{n=0}^N(
M_{d\times d})_n$ is mapped into an operator located in the algebra
$\bigotimes_{n=0}^{N+1}(M_{d\times d})_n$.

We expect that the occurrence of non-trivial multicommutators
that do not vanish should be typical of interacting quantum systems.
Of course, in general, the dynamics is such that one deals with
continuous automorphism groups and with multicommutators
by far more complicated. However, the preceding analysis indicates
that in the present abstract model multicommutators are responsible for the
nonexistence of invariant states. This gives a hint that also in more
general situations multicommutators should play an important
role in the search for invariant states.

\bigskip

\section{Appendix}
\subsubsection*{The Cantor-like argument needed, following equation (\ref{Funct})}

Define $S_N$ as a function on the integers such that
$$
S_N(k):=\frac{1}{N}\sum_{n=0}^{N-1}v_n^*v_{n+k}
$$
 as in (\ref{Funct}), where
 $|S_N(k)|\leq K$ by assumption.
Now
${S_N(1)}$ need not converge, but one can extract
a subset $\{N(j,1),\,j=1,2,3\ldots\}\in\mathbb N$
such that ${S_{N(j,1)}(1)}$ converges, as $j\to\infty$.
Then we proceed inductively, define for each $k$ a smaller subset
$\{N(j,k),\,j=1,2,3\ldots\}\subset\{N(j,k-1)\}$, such that ${S_{N(j,k)}(k)}$ converges as $j\to\infty$,
with $k$ fixed.

Now, and here is the Cantor diagonalization trick,
one considers the set $\{N_j:=N(j,j)\}$.
The sequences ${S_{N(j,j)}(k)}$ converge, as $j\to\infty$, for each $k$.

\subsubsection*{Positive-definiteness}

Consider a set $\{z_i\},\,|i|\leq M$, extend the set $\{v_n\}$, defining $v_n:=0$ for $n<0$,
and transform
$$\sum_{i,j}z_i^*s_{i-j}z_j=\lim_{N\to\infty}\frac1N\sum_{i,j}\sum_{m=i}^{N+1+i}v^*_{m-i}z_i^*v_{m-j}z_j\,.$$
Considering the bounds $|v_n|<V$, $|z_i|<Z$, one gets
$$\sum_{i,j}z_i^*s_{i-j}z_j=
\lim_{N\to\infty}\frac1N \left[\sum_{m\in\mathbb Z}\left(\sum_i v_{m-i}z_i\right)^*\left(\sum_jv_{m-j}z_j\right)
+\textrm{O}\, (M^2\cdot V^2\cdot Z^2)\right].$$
The error-term vanishes in the limit $N\to\infty$ (to be taken over the subset of $N$ where limits of the $S_N$ exist).
Then one may consider approaching $\ell^2$ sequences of $z_i$ by finite sequences.
\bigskip

\noindent
{\bf Acknowledgement}\quad
It is a pleasure to thank Dimitri Petritis, Johannes Schoissengeier
and Valerio Cappellini for useful hints and comments.

\bibliographystyle{plain}

\end{document}